\newtheorem{theorem}{Theorem}
\newtheorem{proposition}[theorem]{Proposition}
\newtheorem{remark}[theorem]{Remark}
\newenvironment{proof}[1][Proof]{\noindent\textbf{#1.} }{\ \rule{0.5em}{0.5em}}
\title{ Stochastic modeling using Adomian method and Fractionnal differential
equations.}
\author{\href{https://orcid.org/0000-0000-0000-0000}{\includegraphics[scale=0.06]{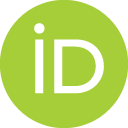}\hspace{1mm}Komla Elom ADEDJE} \\
	Université Thomas SANKARA\\
	BP: 417 Ouagadougou 12, Burkina Faso.\\
	\texttt{keadedje@gmail.com} \\
	\And	
	\href{https://orcid.org/0000-0000-0000-0000}{\includegraphics[scale=0.06]{orcid.png}\hspace{1mm}Diakarya Barro} \\
	Université Thomas SANKARA \\
	BP: 417 Ouagadougou 12, Burkina Faso.\\
	\texttt{Corresponding author : dbarro2@gmail.com} \\
}
\begin{document}
\maketitle

\begin{abstract}
In this paper, we propose a fractional differential equation of order one-half, to model the evolution through time of the dynamics of accumulation and elimination of the contaminant in human organism with a deficient immune system, during consecutive intakes of contaminated food. This process quantifies the exposure to toxins of subjects living with comorbidity (children not breast-fed, the elderly, pregnant women) to food-born diseases. The Adomian Decomposition Method and the fractional integration of Riemann Liouville are used in the modeling processes.
\end{abstract}

\keywords{Stochastic process, Kinetic
Dietary Exposure; fractional differential equation; fractional Riemann Liouville integration; Adomian decomposition method.}
\textbf{2010 MSC: } 26A33, 37A50, 65H20

\section{Introduction}
\section{Introduction}

Food safety is an important issue in the search for solutions to the various pathologies such as food-borne diseases, cancer, renal failure, diabetes, AIDS, Crohn's disease, ulcerative colitis, etc that the human species has been facing in recent decades. The continuous time model named Kinetic Dietary Exposure Model (KDEM) introduced in \cite{verger2007integration} and later in \cite{tillier2014theorie}, is a stochastic process which allows to model the phenomena of accumulation and elimination of the contaminant from the human organism and which gives an answer to this quest for a solution to the current health problems. Their model, aims to represent the evolution through time of a quantity of  contaminant in the human body.\\

Note that this theory developed to model the evolution of contaminant dynamics in the body, does not take into account a failing immune system. Subjects who have contracted a
bacteria, malnourished children, diabetics etc, have a weak immune system. So, the speed of elimination of the contaminant in the body becomes slow. Note that the linear pharmacokinetic model with one compartment is used in the existing model to describe the phenomenon of elimination of the contaminant by the immune system. The quantity $ X\left(t\right) $ of the contaminant in the body between two consecutive intakes decreases according to the linear differential equation of the form  
\begin{equation}\label{myeq1}
\frac{dx\left( t\right) }{dt} = - \theta x\left( t\right); 
\end{equation}
where $ X\left( t\right) $ is given time t, $ \theta > 0 $ is the removal rate of the contaminant between the dates $ T_{i} $ and $ T_{i+1}, $ with $ \theta = ln2/DV, $ DV being the biological half-life of the contaminant (see more in \cite{tressou2004probabilistic}).\\

The dynamics of the evolution of the accumulation and elimination of the contaminant proposed in the KDEM model is of the form 

\begin{equation}\label{myeq2}
X_{n+1}\left( t\right)  = X_{n}.e^{-\theta t} + E_{n+1};
\end{equation}
and the exposure E of an individual is calculated for all p = 1,...,P as follows  

\begin{equation}\label{myq3}
E = \frac{\displaystyle\sum_{p=1}^{P}Q_{p}C_{p}}{w},
\end{equation}
where P products, $ Q_{p} $ contamination of p foods, $ C_{p} $ consumption of p foods, $ w $ the individual's body weight (see \cite{clemenccon2009exposition}). \\
The structural properties of this exposure process have been studied extensively in \cite{debes2006impact} using integrated Markov chain analysis (see more in \cite{spatial}). Such autoregressive models with random coefficients have been widely studied in the literature.  \\

In contrast to KDEM model, we propose in this paper an approach based on fractionnal calculations, leading to a process of the dynamics of the evolution of the accumulation and elimination of the contaminant. A process that takes into account the delay in the elimination of the contaminant when the immune system is weakened by a pathology or a situation of malnutrition in children and pregnant women. However, we propose a FDE  that models this delay in elimination. Let's recall for this purpose, that in biology, it was deduced that the membranes of cells of biological organism have electrical conductance of fractional order \cite{chen2005rlc} and then is classified in group of models not - integer.  \\

The main contribution of this paper is to propose a dynamic model that takes into account the accumulation and elimination of contaminants from the human body of subjects living with comorbidity. In this contex, the exposure process is described by a random variable X, which represents the quantity of contaminant ingested during a short period.\\
 The rest of the paper is organized as follows. In section \ref{sec:headings1}, we will first lay the mathematical foundations to build the stochastic process that will be used to dynamically model the evolution of the accumulation and elimination of a contaminant when the human organ system fails. Section \ref{sec:headings2} proposes a stochastic model called fractional differential equation (FDE) of order 1/2 whose particularity is to take into account the delay in the evolution of a phenomenon. Moreover, we make a brief comparison of the existing classical KDEM model that we implemented with the dioxin dietary intake data and the first model that we elaborated with the failed immune systems.

\section{Materials and methods}
\label{sec:headings1}
In this section, we summarise the fractional integration of Riemann-Liouville and the Adomian decomposition method. These mathematical tools turn out to be very important for our study.

\subsection{ An overview of fractional integral of Riemann-Liouville}
\label{part1}
Successive integer-order derivatives have been used to model many life phenomena, including exposure to food risks. Here, our approach is based on fractional integral of Riemann-Liouville (see \cite{abdeljawad2016geometric} and \cite{loufouilou2013application} for more details).\\
 Let $ \alpha \in \mathbb{R}_{+}^{*}, a \in \mathbb{R} $ and f a locally integrable function defined on $ \left[ a, +\infty\right),$ the integral of order $ \alpha $ of $f$ of lower bound a is defined for all $ t\geq \alpha, $ such as :  
	
	\begin{equation}\label{my4}
	_{a}I_{t}^{\alpha}f\left( t\right) =\frac{1}{\Gamma\left( \alpha\right) }\int_{a}^{t}\left( t - \tau\right)^{\alpha-1}f\left( \tau\right)d\tau;
	\end{equation}
while $ \Gamma\left( x\right) = \int^{+\infty}_{0} t^{x-1}e^{-t}dt$ is the well known Euler gamma function. This integral is convergent, for all $ x > 0. $ Note that the Beta function is related to the Gamma one and plays an important role in fractional calculations which is defined by  \cite{fridjet2018derivee} such as, for all x, y $>$ 0, 
\begin{equation}\label{my2}
 B\left( x, y\right) = \int_{0}^{1}t^{x-1}\left( 1 - t\right) ^{y-1}dt = \frac{\Gamma\left( x\right) \Gamma\left( y\right) }{\Gamma\left( x + y\right) }.
\end{equation} 
Furthermore, the fractional left derivative of order $ \alpha $ over a given interval $\left[a, b \right]$ of Riemann Liouville is such as, for all $\alpha > 0, x \in \left[a, b \right]$ and n = $\left[\alpha \right] + 1$ 
\begin{equation}
\mathcal{D}_{a^{+}}^{\alpha}f\left(x\right)=\frac{1}{\Gamma\left( n - \alpha\right) }\frac{d^{n}}{dx^{n}}\left(\int_{a}^{x}\left( x - t\right)^{n - \alpha - 1 }f\left(t\right)dt\right); 
\end{equation}
where $\left[\alpha \right]$ is the integer part of $ \alpha $.\\
\subsection{An overview of Adomian Decomposition Method} 
\label{part2}
Suppose that we need to solve a functional equation of the form
\begin{equation}\label{eq}
Au = f; \hspace{0.2cm} \mbox{where} \hspace{0.2cm} A : H \rightarrow H
\end{equation}
is a nonlinear operator defined on a real Hilbert space H, u is the unknown function defined in H and f is a function given in H. First, we decompose the operator A into :
\begin{center}
	A = L + R + N ;
\end{center}
where L is an operator inversible in the sense of Adomian, R is the rest of the linear part and N is the the nonlinear part. So, $\left( \ref{eq}\right)$ provides the relation

\begin{equation}
 Lu + Ru + Nu = f.
\end{equation}
The operator L is inversible, we obtain the Adomian canonical form \cite{yindoula2014application}

\begin{equation}
u = \theta + L^{-1}f - L^{-1}Ru - L^{-1}Nu;
\end{equation}
where $ \theta $ is a constant such that : $ L\theta = 0. $
In particular, the solution of equation $\left( \ref{eq}\right)$ is sought in the form of a series and we decompose the nonlinear part into :

\begin{equation}
Nu = \displaystyle\sum_{n = 0}^{+\infty}A_{k}\left( u_{0},...,u_{k}\right).
\end{equation}
Furthermore, one has :

\begin{equation}\label{sum}
\displaystyle\sum_{n = 0}^{+\infty}u_{n} = \theta + L^{-1}f - L^{-1}R\left( \displaystyle\sum_{n = 0}^{+\infty}u_{n}\right) - L^{-1}N\left( \displaystyle\sum_{n = 0}^{+\infty}u_{n}\right).
\end{equation}
We obtain by identification of the terms the following algorithm under the hypothesis of the convergence of the series $ \left( \displaystyle\sum_{n = 0}^{+\infty}u_{n}\right);$  for all $n \geq 0 $, see \cite{beyi2020application}
\begin{equation}
\left\lbrace \begin{array}{ll}
u_{0} = \theta + L^{-1}f \\
u_{n + 1} = -L^{-1}R\left( u_{n}\right) - L^{-1}N\left( u_{n}\right); 
\end{array}\right.
\end{equation}
The nonlinear part is obtained from the Adomian polynomials defined as follows
 
\begin{equation}
A_{n} = \frac{1}{n!}\left[ \frac{d^{n}}{d\lambda^{n}}N\left( \displaystyle\sum_{i=1}^{+\infty}\lambda^{i}u_{i}\right) \right] _{\lambda = 0}.
\end{equation}
In practice, the following formula is used to calculate polynomials  (see \cite{el2007new} for more details). For all n $\geq 0$, one has
\begin{equation}
\left\lbrace \begin{array}{ll}
A_{0} = N\left( u_{0}\right)  \\
A_{n + 1} = \frac{1}{n + 1}\left( \displaystyle\sum_{k=0}^{n}\left( k + 1\right)u_{k+1}\frac{\partial A_{n}}{\partial u_{k}}\right). 
\end{array}\right.
\end{equation}
The following section deals with the main of our paper results.

\section{Main results}
\label{sec:headings2}
 In this paper we propose an alternative method to the existing ones by the non-integer order Riemann-Liouville derivation, an integration that is the inverse operation of the derivation, which effectively accounts for the delay in the elimination of the contaminant from the body by the immune system during the evolution of the phenomenon. Before building the model, we let's make the following assumptions :\\
 
\noindent
(C1): The subjects living with co-morbidities have a less efficient immune system and therefore fail in contrast to immunocompetent subjects.\\

\noindent
(C2): The amount of food toxin in the body today depends on the amount accumulated yesterday, thus dependence between contaminations in time. \\

\noindent
Let's notice that when the human organism is immunocompromised, the rate of elimination of the contaminant from the body by the immune system is slow compared to an immunocompetent organism. Hence, the need for a fractional differential equation that takes into account the delay in the evolution of the phenomenon. Let's begin by this example where we present the difference between the non-integer derivation and the fractional derivation in the sense of Riemann Liouville of function is that the non-integer one undergoes a delay in its evolution. 
	\begin{equation}
	D_{0^{+}}^{1/2}x=\frac{1}{\Gamma\left( \frac{1}{2}\right) }\frac{d}{dx}\left(\int_{0}^{x}\left( x - t\right) ^{-\frac{1}{2}}tdt\right); 
	\end{equation}
where $D_{0^{+}}^{1/2}x$ is the variation in the amount of contaminant in the body. Now, by using the well known relationship, the change of variable $ dt = xds, s \in \left[ 0,1\right] $
	
It comes that,
	\begin{equation}
 D_{0^{+}}^{1/2}x  =\frac{1}{\Gamma\left( \frac{1}{2}\right) }\frac{d}{dx}\int_{0}^{1}\left( x - sx\right) ^{-\frac{1}{2}}sx.xds; 
	\end{equation}
and furthermore, after a simple calculation, one obtains
	 \begin{equation}
	D_{0^{+}}^{1/2}x  = \frac{3}{2}x^{\frac{1}{2}}\frac{1}{\Gamma\left( \frac{1}{2}\right) }\frac{\Gamma\left( \frac{1}{2}\right) \Gamma\left( 2\right) }{\Gamma\left( \frac{5}{2}\right) }= \frac{2\sqrt{x}}{\sqrt{\pi}}.
	\end{equation}
	
\begin{remark}
A relatively delayed stochastic process can be modeled by a fractional differential equation.
\end{remark}

\subsection{ Univariate stochastic process }
Note that when a system is disturbed or delayed, it allows a slowed motion, hence the need to model its trajectory by fractional derivation. So, subjects living with comorbidity develop pathology more quickly following exposure to food toxins. This is due to the slow elimination of toxins from their immune systems. The delay in the elimination of contaminants ingested during dietary intake in comorbid individuals can be well modeled by a fractional differential equation of order 1/2. Indeed, the linear one-compartment pharmacokinetic model long used by toxicologists is that of an ordinary differential equation of order 1

\begin{proposition}\label{myth}
	
Let A be the initial body burden of contaminant at date $ T_{0}= 0.$ Between two consecutive intakes the amount of contaminant in the immunodeficient organism decreases with time according to the following fractional differential equation of non-integer order :
	
	\begin{equation}
	\left\lbrace \begin{array}{ll}
	D^{1/2} X\left( t\right)= - \theta X\left( t\right)  \\
	X\left( 0\right) = A
	\end{array}\right.
	\end{equation} 
The exposure computed immediately after the i-th food intake is such as
\begin{equation}
\label{propo}
	X\left( t\right) = A\left( e^{\theta^{2} t} - \frac{4\mid \theta \mid\sqrt{t}}{\sqrt{\pi}}\displaystyle\sum_{n = 0}^{+\infty}\frac{4^{n}\left( n + 1\right) ! \left( \theta^{2}t\right) ^{n}}{\left( 2n + 2 \right) !}\right) ;
\end{equation}
where $ X\left( t\right) $ is the amount of the contaminant in the body at given time t, $ \theta > 0 $ is the removal rate of the contaminant between the dates $ T_{i} $ et $ T_{i+1}, $ where $ \theta = ln2/DV, $ DV being the biological half-life of the contaminant; X(0) is the initial body burden of contaminant at $ T_{0} = 0. $

\end{proposition}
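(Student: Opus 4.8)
The plan is to read $D^{1/2}$ as the invertible operator $L$ of the Adomian scheme of Section~\ref{part2} and to invert it with the Riemann--Liouville half-integral $_{0}I_{t}^{1/2}$ of Section~\ref{part1}. Applying $_{0}I_{t}^{1/2}$ to both sides of $D^{1/2}X=-\theta X$ and using the initial datum $X(0)=A$ to fix the free constant of the Adomian canonical form (here the source term $f$ and the nonlinear part $N$ both vanish, so only the linear rest term, multiplication by $\theta$, survives), I would reduce the problem to the canonical integral equation
\[
X(t)=A-\theta\,{}_{0}I_{t}^{1/2}X(t).
\]
Expanding $X=\sum_{k\ge 0}X_k$ and identifying terms as in $(\ref{sum})$ then gives the recursion $X_0=A$ and $X_{k+1}=-\theta\,{}_{0}I_{t}^{1/2}X_k$.

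The computational engine is the power rule $_{0}I_{t}^{1/2}t^{\beta}=\frac{\Gamma(\beta+1)}{\Gamma(\beta+\frac{3}{2})}t^{\beta+\frac{1}{2}}$, which follows from definition $(\ref{my4})$ by the substitution $\tau=ts$ together with the Beta identity $(\ref{my2})$, mirroring the change of variables used in the worked example immediately preceding this proposition. Feeding this rule into the recursion yields the first iterates $X_0=A$, $X_1=-\frac{2\theta}{\sqrt\pi}A\sqrt t$, $X_2=\theta^2 A t$, and then, by induction on $k$, the closed form
\[
X_k(t)=A\,(-\theta)^k\,\frac{t^{k/2}}{\Gamma\left(\frac{k}{2}+1\right)} .
\]
The inductive step is clean because the factor $\Gamma\left(\frac{k}{2}+1\right)$ produced by the power rule cancels the denominator carried from $X_k$, leaving only $\Gamma\left(\frac{k}{2}+\frac{3}{2}\right)=\Gamma\left(\frac{k+1}{2}+1\right)$; in effect the product of Gamma ratios telescopes to a single denominator.

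Summing the series gives $X(t)=A\sum_{k\ge 0}(-\theta\sqrt t)^k/\Gamma\left(\frac{k}{2}+1\right)$, a Mittag--Leffler function $E_{1/2}(-\theta\sqrt t)$; its entirety guarantees absolute convergence for every $t$ and legitimises the rearrangement that follows. Splitting the sum by the parity of $k$ separates the integer from the half-integer powers of $t$: the even indices $k=2n$ contribute $\sum_{n}(\theta^2 t)^n/n!=e^{\theta^2 t}$, while the odd indices $k=2n+1$ contribute $-\theta\sqrt t\sum_{n}(\theta^2 t)^n/\Gamma\left(n+\frac{3}{2}\right)$.

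The final step is bookkeeping: the Legendre value $\Gamma\left(n+\frac{3}{2}\right)=\frac{(2n+2)!}{4^{\,n+1}(n+1)!}\sqrt\pi$ converts the odd sum into $-\frac{4\theta\sqrt t}{\sqrt\pi}\sum_{n}\frac{4^{n}(n+1)!(\theta^2 t)^n}{(2n+2)!}$, and recombining with the even part reproduces $(\ref{propo})$ exactly, with $|\theta|=\theta$ since $\theta>0$. I expect the main obstacle to be the inductive identification of the general term $X_k$ --- in particular, getting the telescoping of the half-integer Gamma ratios right --- together with the conversion between half-integer Gamma values and factorials needed to match the stated factorial form; by contrast, the convergence and the even/odd rearrangement are routine once the Mittag--Leffler structure is recognised.
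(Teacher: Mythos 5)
Your proposal is correct and takes essentially the same route as the paper: the Adomian scheme with $L^{-1} = {}_{0}I_{t}^{1/2}$, the recursion $X_{k+1} = -\theta\, {}_{0}I_{t}^{1/2}X_{k}$, the even/odd parity split of the resulting series, and the same conversion of half-integer Gamma values into the stated factorial form. Your closed form $X_{k}(t) = A(-\theta)^{k}t^{k/2}/\Gamma\left(\frac{k}{2}+1\right)$, established by induction via the power rule, in fact tightens the paper's argument, which computes $X_{0},\dots,X_{7}$ term by term and infers the general pattern without a formal inductive step.
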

The human organism has a system of accumulation in contaminants and these last ones being eliminated in a progressive way and also depending on the frequency of the food intake.\\
\begin{proof}
\noindent	
~We propose a proof based on fractional Riemann Liouville integration and the Adomian decomposition method.\\

\noindent
Consider the following equation,	
\begin{equation}\label{myeq2}
   \left\lbrace \begin{array}{l} 
	D^{1/2} X\left( t\right) + \theta X\left( t\right) = 0 \\
	X\left( 0\right) = A 
	\end{array}\right. 
\end{equation}
Then, it follows that
	\begin{equation}
	\left\lbrace \begin{array}{lll} 
	LX & = & D_{0^{+}}^{1/2} X \\
	RX & = & \theta X\left( t\right) \\
	g & = & 0 
	\end{array}\right. \hspace{0.2cm} \mbox{where} \hspace{0.2cm} L^{-1} = I_{0^{+}}^{1/2}.
	\end{equation}
 By using $\left(\ref{myeq2}\right)$, it comes that 
	\begin{equation}\label{myeq3}
	LX + RX = g .
	\end{equation}
By using $ L^{-1} $ to $\left(\ref{myeq3}\right),$ one has :

\begin{equation}
 L^{-1} \left( LX\right) = I_{0^{+}}^{1/2}\left( D_{0^{+}}^{1/2} X\left( t\right) \right) = X\left( t\right) - X\left( 0\right).
\end{equation} 
 It comes that
	\begin{equation}\label{myeq5}
  X\left( t\right) - X\left( 0\right) +  L^{-1}\left( RX\right) = L^{-1} g. 
	\end{equation}
That allows us to consider	with the form
	

	\begin{equation}\label{myeq6}
	 X = \displaystyle\sum_{n=0}^{+\infty} X_{n} .
	\end{equation}
From (\ref{myeq6}) and (\ref{myeq5}), it yields that
 
	\begin{equation}
	\Sigma_{n=0}^{+\infty} X_{n}\left( t\right) = X\left( 0\right) + L^{-1}g - \Sigma_{n=0}^{+\infty} L^{-1}\left( RX_{n}\left( t\right) \right); 
	\end{equation}
giving the following system
	\begin{equation}
	\left\lbrace \begin{array}{ll} 
	X_{0}\left( t\right)  = X\left( 0 \right) + L^{-1}g \\
	X_{n+1}\left( t\right)  = - L^{-1}\left( R\left( X_{n}\left( t\right) \right) \right) & n\geq 0 .
	\end{array}\right.
	\end{equation}
Calculation of the terms of $ X_{n}\left( t\right). $ \\
At order 0, we have $X_{0}\left( t\right)$ such as :

	\begin{equation}\label{myeq7}
	X_{0}\left( t\right) = X\left( 0\right) + L^{-1}g 
	= A .
	\end{equation}
A order 1, $X_{1}\left(t\right),$ using $\left(\ref{myeq7}\right)$ one has :

	\begin{equation}\label{myeq8}
	 X_{1}\left( t\right) = 
	 =  -I_{0^{+}}^{1/2}\left( \theta  X_{0}\left( t\right)\right) 
	 = - I_{0^{+}}^{1/2}\left( \theta A\right) = - \frac{2 \theta A t^{1/2}}{\sqrt{\pi}}.
	\end{equation}
By using $\left(\ref{myeq8}\right),$  one has :	

	\begin{equation} \label{myeq10}
 X_{2}\left( t\right) =  - L^{-1}\left( R X_{1}\left( t\right) \right)  
	  =  - I_{0^{+}}^{1/2}\left( \theta\left( \frac{-2\theta A t^{1/2}}{\sqrt{\pi}} \right) \right) 
    =  \frac{2\theta^{2} A I_{0^{+}}^{1/2}t^{1/2}}{\sqrt{\pi}}.
	\end{equation}	
From $\left( \ref{eqs1}\right)$ and $\left( \ref{my2}\right),$ it comes that
	\begin{equation} \label{myeq11}
	 I_{0^{+}}^{1/2}\left( t^{1/2}\right)  =  \frac{1}{\Gamma\left( \frac{1}{2}\right) }\int_{0}^{1}\left( t - st\right) ^{-1/2}\left( st\right) ^{1/2}tds = \frac{t}{\Gamma\left( \frac{1}{2}\right) }\int_{0}^{1}\left( 1 - s\right) ^{\frac{1}{2}-1}s^{1/2}ds = \frac{t\sqrt{\pi}}{2}
	 \end{equation}
By introducing (\ref{myeq11}) into (\ref{myeq10}), it comes that 
	\begin{equation} \label{my1}
	X_{2}\left( t\right)  =  \frac{2\theta^{2} A}{\sqrt{\pi}}\frac{t\sqrt{\pi}}{2} 
	  =  \theta^{2} A t .
	\end{equation}
	For $X_{3}\left(t\right),$ one has
	\begin{equation} 
   X_{3}\left( t\right)  =  - L^{-1}\left( R X_{2}\left( t\right) \right) .
   \end{equation}
   By replacing (\ref{my1}) in the above formula, it comes that
   \begin{equation}\label{myeq12}
	 X_{3}\left( t\right)  =  -I_{0^{+}}^{1/2}\left( \theta\left( \theta^{2} A t\right) \right) 
      = -\theta^{3} A I_{0^{+}}^{1/2}t =\frac{t^{1/2}}{\Gamma\left( \frac{1}{2}\right) }\int_{0}^{1}\left( 1 - s\right)^{-1/2}sds  =  \frac{-4\theta^{3} At^{3/2}}{3\sqrt{\pi}}.
    \end{equation}
In an analogous way one obtains from successively the terms

	\begin{eqnarray}
	\left\lbrace \begin{array}{lllllll}
	X_{0}\left( t\right)  = A \\
	X_{1}\left( t\right)  = - \frac{2 \theta A t^{1/2}}{\sqrt{\pi}} \\
	X_{2}\left( t\right) = \theta^{2} A t \\
	X_{3}\left( t\right) =  \frac{-4\theta^{3} At^{3/2}}{3\sqrt{\pi}} \\
	X_{4}\left( t\right) = \frac{\theta^{4}At^{2}}{2} \\
	X_{5}\left( t\right) = -
	\frac{8\theta^{5}At^{5/2}}{15\sqrt{\pi}} \\
	X_{6}\left( t\right) = \frac{\theta^{6}At^{3}}{6} \\
	X_{7}\left( t\right) = -\frac{16\theta^{7}At^{7/2}}{105\sqrt{\pi}} \\
	\vdots  
	\end{array}\right. 
	\end{eqnarray}	
However in pratice, all terms of the series cannot be determined; so we use an approximation of the solution from the truncated series as follows	
	\begin{equation}
	X\left( t\right)   =  \displaystyle \sum_{n=0}^{+\infty}  X_{n}\left( t\right); 
	\end{equation}
which provides the following equivalences
\begin{eqnarray}
X\left( t\right) & \simeq &  X_{0}\left( t\right) + X_{1}\left( t\right) + X_{2}\left( t\right) + X_{3}\left( t\right) + X_{4}\left( t\right) + X_{5}\left( t\right) + X_{6}\left( t\right) + X_{7}\left( t\right) \cdots \\ 
	& \simeq & A - \frac{2\theta At^{1/2}}{\sqrt{\pi}} + \theta^{2}At - \frac{4\theta^{3}At^{3/2}}{3\sqrt{\pi}} + \frac{\theta^{4}At^{2}}{2} - \frac{8\theta^{5}At^{5/2}}{15\sqrt{\pi}} + \frac{\theta^{6}At^{3}}{6} - \frac{16\theta^{7}At^{7/2}}{105\sqrt{\pi}} + \cdots\\
	& \simeq & A\left( 1 + \theta^{2}t + \frac{\theta^{4}t^{2}}{2!}  + \frac{\theta^{6}t^{3}}{3!} + \cdots \right) + \\
	& + &\frac{A}{\sqrt{\pi}}\left( -\frac{\theta t^{1/2}}{\frac{1}{2}} - \frac{\theta^{3}t^{3/2}}{\frac{3}{2}\times\frac{1}{2}} - \frac{\theta^{5}t^{5/2}}{\frac{5}{2}\times\frac{3}{2}\times\frac{1}{2}} - \frac{\theta^{7}t^{7/2}}{\frac{7}{2}\times\frac{5}{2}\times\frac{3}{2}\times\frac{1}{2}} + \cdots \right). \\ 
	\end{eqnarray}
	In the same way, it follows that
	\begin{eqnarray}
X\left( t\right) & \simeq & A\displaystyle \sum_{n=0}^{+\infty}\frac{\left( \theta^{2}t\right) ^{n}}{n!} - A\displaystyle \sum_{n=0}^{+\infty}\frac{\left( \theta^{2}t\right) ^{\frac{1}{2}+n}}{\Gamma\left( \frac{1}{2}+ n + 1\right) } \\ 
	& \simeq & Ae^{\theta^{2} t}- A\displaystyle \sum_{n=0}^{+\infty}\frac{\left( \theta^{2}t\right) ^{\frac{1}{2}+n}}{\Gamma\left( \frac{1}{2}+ n + 1\right) } \\ 
	& \simeq & Ae^{\theta^{2} t} - A\frac{4\sqrt{\theta^{2} t}}{\sqrt{\pi}}\displaystyle\sum_{n = 0}^{+\infty}\frac{4^{n}\left( n + 1\right) ! \left( \theta^{2}t\right) ^{n}}{\left( 2n + 2 \right) !} \\ 
	& \simeq & Ae^{\theta^{2} t} - A\frac{4\mid \theta \mid\sqrt{t}}{\sqrt{\pi}}\displaystyle\sum_{n = 0}^{+\infty}\frac{4^{n}\left( n + 1\right) ! \left( \theta^{2}t\right) ^{n}}{\left( 2n + 2 \right) !}\cdot \\ 
\end{eqnarray}	
Finally, one obtains
\begin{equation}\label{lig}
	X\left( t\right) \simeq  A\left( e^{\theta^{2} t} - \frac{4\mid \theta \mid\sqrt{t}}{\sqrt{\pi}}\displaystyle\sum_{n = 0}^{+\infty}\frac{4^{n}\left( n + 1\right) ! \left( \theta^{2}t\right) ^{n}}{\left( 2n + 2 \right) !}\right)
\end{equation}	
	
\end{proof}

Note that $ X\left( t\right) $ represents the behavior of the quantity of contaminated intake at each consumption in humain body through time for subjects living with comorbidity.

\begin{proposition}	
Under (C1), the trajectory of the accumulated exposure is defined as follows
	\begin{equation}\label{tra}
	X_{i+1}\left( t\right) = X_{i}\left( e^{\theta^{2} t} - \frac{4\mid \theta \mid\sqrt{t}}{\sqrt{\pi}}\displaystyle\sum_{n = 0}^{+\infty}\frac{4^{n}\left( n + 1\right) ! \left( \theta^{2}t\right) ^{n}}{\left( 2n + 2 \right) !}\right) + E_{i+1};
	\end{equation}
$ E_{i+1} $ is the calculated exposure at time $ t_{i+1}. $ 
	
\end{proposition}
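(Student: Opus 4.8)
The plan is to read $(\ref{tra})$ as the one-step recursion of an autoregressive process whose deterministic kernel is the single-intake decay law of Proposition \ref{myth}, in direct analogy with the classical KDEM recursion of the introduction. First I would isolate from $(\ref{propo})$ the decay factor
\[
\Phi(t) = e^{\theta^{2} t} - \frac{4\mid \theta \mid \sqrt{t}}{\sqrt{\pi}} \sum_{n=0}^{+\infty} \frac{4^{n}\left( n + 1\right)!\,\left( \theta^{2}t\right)^{n}}{\left( 2n + 2\right)!},
\]
so that Proposition \ref{myth} says precisely that a body initialised with load $A$ at $T_{0}=0$ carries the residual burden $A\,\Phi(t)$ after an elapsed time $t$.

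The central step is linearity in the initial datum. The governing equation $D^{1/2}X(t) = -\theta X(t)$ with $X(0)=A$ is linear and homogeneous, so the solution operator $A \mapsto X(t)$ is linear; equivalently, every term $X_{n}(t)$ of the Adomian series built in the proof of Proposition \ref{myth} is proportional to the initial value, whence the whole sum scales with it. Consequently, replacing the initial load $A$ by the accumulated burden $X_{i}$ present just after the $i$-th intake produces $X_{i}\,\Phi(t)$ as the residual burden throughout the subsequent inter-intake interval $\left[ T_{i}, T_{i+1}\right)$.

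It remains to close the recursion. Under (C2) the contaminant present at the start of a period is exactly what survives from the previous one; at the $(i+1)$-th intake a fresh exposure $E_{i+1}$, computed from $(\ref{myq3})$ at time $t_{i+1}$, is ingested and superimposed on this residual burden. Additivity of the carried-over and newly ingested contributions then yields
\[
X_{i+1}(t) = X_{i}\,\Phi(t) + E_{i+1},
\]
which is $(\ref{tra})$; the base case $i=0$ is Proposition \ref{myth} itself with $A=X_{0}$.

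The main obstacle is the reinitialisation performed at each intake. Unlike the integer-order model $(\ref{myeq1})$, the half-order derivative $D^{1/2}$ is \emph{nonlocal}: its Riemann--Liouville kernel $\left( t-\tau\right)^{-1/2}$ weighs the entire past of $X$, so restarting the clock at each $T_{i}$ with $X_{i}$ as a fresh initial condition formally discards the memory accumulated before $T_{i}$. The substantive work is therefore to justify this piecewise reinitialisation rather than the termwise scaling (which is routine): one must either adopt it as the renewal convention encoded in (C1)--(C2), or argue that the pre-$T_{i}$ history is absorbed into the lower terminal of the fractional integral $I_{0^{+}}^{1/2}$ and hence into the effective load $X_{i}$, so that the superposition stays consistent across successive intervals.
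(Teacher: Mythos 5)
Your proposal is correct and follows essentially the same route as the paper: the decay term is Proposition \ref{myth} applied with the accumulated burden $X_{i}$ in place of the initial load $A$, and the fresh exposure $E_{i+1}$, computed as in $\left( \ref{myq3}\right)$, is superimposed additively. If anything, your write-up is more careful than the paper's own proof, which simply cites Proposition \ref{myth} for the first term and prior work (Bertail et al.) for the second, without the linearity argument justifying the substitution $A \mapsto X_{i}$ and without acknowledging the reinitialisation issue for the nonlocal Riemann--Liouville operator that you rightly flag as the real gap in this renewal-type construction.
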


\begin{proof}
~According to proposition $\left(\ref{myth}\right)$ the first term of $\left(\ref{tra}\right)$ is proved. We obtain the second term proposed in the work of Bertail in \cite{clemenccon2009exposition} and \cite{debes2006impact}, i.e the exposure over a given period is none other than the sum of the consumption of the contaminated products, weighted by the contamination rates associated with each of the products of doubtful quality. By noting P the number of foodstuffs carrying the contamination $C_{P}$, the consumption of any individual of body weight w and $Q_{P}$ the contamination rate in $\mu g/kg$ or pg/kg of each of these foodstuffs. This leads to the result of the random exposure of an individual to the contaminant of interest as follows $\left( \ref{myq3}\right) $ concluding the proof
\end{proof}
\subsection{Conditional dependence of multivariate stochastic process }
Let us make some additional assumptions\\

(C3): Let us define the health risk arising from the accumulation and interaction between the different contaminants in the body. For instance, this involves assessing the risk due to the accumulation of three substances in the body. It should be noted that studies have shown that there is a dependency between the trajectories of contaminants in the body.\\

(C4): The body burden of each contaminant during the month depends on the body burden of that contaminant in the previous month and the new exposure.\\

Based on the fact that contaminants are stored and eliminated by the same organ, we make the following dependency hypothesis. The similar approach can be found in \cite{analysis}.  The above assumptions lead to the following proposition 

\begin{proposition}
Let $ X_{i} = \left( X_{i1},X_{i2},X_{i3}\right), i=1,...,n $ be the vectors body burden of each contaminant during $i^{th}$ week or month. Then, the evolution of the quantity of contamiant over time, is getting by the following relation 
\begin{equation}
\left( X_{i}\right) = \left( X_{i-1}\right)\psi + E_{i}
\end{equation}
 with $ \psi=\mathcal{M}_{3\left( \mathbb{R}\right)} $ the matrix of transition probabilities given by \\
 
$ \left( \begin{array}{ccc}
 e^{\theta_{1}^{2} t} - \frac{4\mid \theta_{1} \mid\sqrt{t}}{\sqrt{\pi}}\displaystyle\sum_{n = 0}^{+\infty}\frac{4^{n}\left( n + 1\right) ! \left( \theta_{1}^{2}t\right) ^{n}}{\left( 2n + 2 \right) !} &  a & b \\
c & e^{\theta_{2}^{2} t} - \frac{4\mid \theta_{2} \mid\sqrt{t}}{\sqrt{\pi}}\displaystyle\sum_{n = 0}^{+\infty}\frac{4^{n}\left( n + 1\right) ! \left( \theta_{2}^{2}t\right) ^{n}}{\left( 2n + 2 \right) !} & d \\
e & f & e^{\theta_{3}^{2} t} - \frac{4\mid \theta_{3} \mid\sqrt{t}}{\sqrt{\pi}}\displaystyle\sum_{n = 0}^{+\infty}\frac{4^{n}\left( n + 1\right) ! \left( \theta_{3}^{2}t\right) ^{n}}{\left( 2n + 2 \right) !}
\end{array}\right) $

where $ \theta_{1}, \theta_{2}, \theta_{3} $ denote the removal rates of the contaminant in the body and $E_{i} = \left( E_{i1},E_{i2},E_{i3}\right)$ the vectors of the trajectories of the exposures to the contaminants during the $i^{th}$ week or month. The matrix $ \psi $ contains contaminant removal coefficients.

\end{proposition}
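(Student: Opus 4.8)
The plan is to reduce the multivariate statement to three coupled copies of the univariate dynamics already established, and then to repackage the coupled recursion in matrix form. First I would set up, for each contaminant $j \in \{1,2,3\}$, a fractional differential equation of order $1/2$ governing its body burden between two consecutive intakes. Under the dependency hypotheses (C3) and (C4), the elimination of contaminant $j$ is driven not only by its own removal rate $\theta_j$ but also by the presence of the other two contaminants, since all three are stored and eliminated by the same organ; this yields a system of the schematic form $D^{1/2} X_{\cdot j}(t) = -\theta_j X_{\cdot j}(t) + (\text{cross terms})$, where the cross terms encode the off-diagonal couplings.

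Next I would solve the self-decay part of each equation by directly invoking Proposition $\left(\ref{myth}\right)$: applied to contaminant $j$ with initial burden $X_{(i-1)j}$, it produces the scalar propagator
\begin{equation}
\phi_j(t) = e^{\theta_j^2 t} - \frac{4\mid \theta_j \mid \sqrt{t}}{\sqrt{\pi}} \displaystyle\sum_{n=0}^{+\infty} \frac{4^n (n+1)! (\theta_j^2 t)^n}{(2n+2)!},
\end{equation}
which is exactly the $j$-th diagonal entry of $\psi$. The off-diagonal entries $a,\dots,f$ then collect the contribution of each foreign contaminant to the accumulation and elimination of a given one; these are the contaminant-transfer coefficients posited by (C3) and (C4), rather than quantities derived from a closed formula.

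Then I would assemble the three scalar recursions into a single vector recursion. Writing $X_i = (X_{i1}, X_{i2}, X_{i3})$ and stacking the propagators $\phi_1, \phi_2, \phi_3$ on the diagonal together with the cross coefficients off the diagonal gives precisely the transition matrix $\psi \in \mathcal{M}_3(\mathbb{R})$ displayed in the statement, so that the homogeneous between-intake evolution reads $(X_i) = (X_{i-1})\psi$. Finally, adding the freshly ingested dose at time $t_{i+1}$, which by equation $\left(\ref{myq3}\right)$ is the vector $E_i = (E_{i1}, E_{i2}, E_{i3})$ of weighted consumptions, yields the random-coefficient autoregressive relation $(X_i) = (X_{i-1})\psi + E_i$, exactly mirroring the univariate trajectory $\left(\ref{tra}\right)$ but now in three dimensions.

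The main obstacle I expect is the justification of the off-diagonal structure. Unlike the diagonal terms, which follow rigorously from the fractional calculus of Proposition $\left(\ref{myth}\right)$, the coupling coefficients are not produced by the Adomian scheme and must instead be introduced as modeling parameters through the dependency assumptions (C3) and (C4); making their role precise, and arguing that the superposition of self-decay and cross-transfer is legitimate, is where the argument is least mechanical and most dependent on the biological hypothesis that the contaminants share a common elimination pathway.
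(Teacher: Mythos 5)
Your proposal is correct and follows essentially the same route as the paper: the diagonal entries come from the univariate Proposition \ref{myth}, the exposure vector $E_i$ from the earlier exposure formula, and the off-diagonal couplings $a,\dots,f$ are modeling parameters introduced under (C3) and (C4) rather than derived quantities. In fact the paper's own proof is a single sentence appealing to (C3) and (C4); your write-up supplies the detail the paper only sketches in the discussion following its proof (where it notes the cross coefficients vanish when inter-contaminant effects are neglected and are otherwise left to future copula-based estimation), and you correctly flag that the off-diagonal structure is an assumption, not a consequence of the Adomian scheme.
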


\begin{proof} ~The proof of the previous proposition based on assumptions (C3) and (C4), yield the desired result
\end{proof}

\noindent
It can be seen easily that
$$ \Longrightarrow \left\lbrace \begin{array}{lll}
X_{i1}  & = &  X_{i-1,1}\ast  \left( e^{\theta_{1}^{2} t} - \frac{4\mid \theta_{1} \mid\sqrt{t}}{\sqrt{\pi}}\displaystyle\sum_{n = 0}^{+\infty}\frac{4^{n}\left( n + 1\right) ! \left( \theta_{1}^{2}t\right) ^{n}}{\left( 2n + 2 \right) !}\right)  + c\ast X_{i-1,2}+ e\ast X_{i-1,3} + E_{i1}  \\
 X_{i2}  & = & a\ast X_{i-1,1} +  X_{i-1,2}\ast \left(  e^{\theta_{2}^{2} t} - \frac{4\mid \theta_{2} \mid\sqrt{t}}{\sqrt{\pi}}\displaystyle\sum_{n = 0}^{+\infty}\frac{4^{n}\left( n + 1\right) ! \left( \theta_{2}^{2}t\right) ^{n}}{\left( 2n + 2 \right) !}\right) + f\ast X_{i-1,3} + E_{i2}  \\
 X_{i3} & = & b\ast X_{i-1,1} + d\ast X_{i-1,2} +  X_{i-1,3}\ast  \left( e^{\theta_{3}^{2} t} - \frac{4\mid \theta_{3} \mid\sqrt{t}}{\sqrt{\pi}}\displaystyle\sum_{n = 0}^{+\infty}\frac{4^{n}\left( n + 1\right) ! \left( \theta_{3}^{2}t\right) ^{n}}{\left( 2n + 2 \right) !}\right) + E_{i3} 
\end{array}\right. $$
Neglecting the impact of the removal of one contaminant on the other, we have trajectory $\left( \ref{tra}\right) $ for each contaminant considered, i.e. the coefficients a, b, c, and d are zero. However, when the contaminants are stored in the same organ, then the elimination become more slow, because of the dependency of the elimination. Assumptions will be made to determine the reals a, b, c, d, e, f of the matrix by an Archimedean copula dependence investigate in \cite{barro2016spatial}, \cite{barro} and estimate the probability that the amount of contaminant 1 in the body exceeds a threshold u1 knowing that
contaminant 2 has exceeded the threshold u2. Unfortunately this will not be examined in this work.

\subsection{Generalized Fractional Differential Equation}
Assume once again that the manifestation of a pathology differs from one person to another. Based on this assumption, we propose a generalized fractional differential equation of order $ \alpha $ which takes into account the delay in the evolution of a phenomenon.
\begin{proposition}
Let the following system provides a fractional differantial equation 
\begin{equation}
   \left\lbrace \begin{array}{l} 
	D^{\alpha} X\left( t\right) + \theta X\left( t\right) = 0 \\
	X\left( 0\right) = \kappa 
	\end{array}\right.
\end{equation}
such that t $ > $ 0; $0 < \alpha \leq 1 $;
modeling the evolution through time of the dynamics of accumulation and elimination of the contaminant in human organism with a deficient immune system, during consecutive intakes of contaminanted food. Then, its the solution given by   : 
 \begin{equation}
X\left( t\right) = \displaystyle \sum_{n=0}^{+\infty}\frac{\left( -1\right)^{n}\theta^{n}\kappa t^{n\alpha}}{\Gamma\left( n\alpha + 1\right)};
\end{equation}
is a stochastic time varying process which quantifies the exposure to toxins of subjects living with comorbidity.
\end{proposition}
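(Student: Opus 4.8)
The plan is to reproduce the Adomian scheme of Proposition~\ref{myth} verbatim, but carrying the order $\alpha$ as a free parameter throughout instead of specialising to $1/2$. First I would set $LX = D_{0^{+}}^{\alpha}X$, $RX = \theta X$, $g = 0$, and take $L^{-1} = I_{0^{+}}^{\alpha}$, the Riemann--Liouville integral of (\ref{my4}). Applying $L^{-1}$ to $LX + RX = 0$ and using the inversion identity $I_{0^{+}}^{\alpha}\bigl(D_{0^{+}}^{\alpha}X\bigr) = X(t) - X(0)$ invoked already in Proposition~\ref{myth}, I would write $X = \sum_{n\geq 0} X_{n}$ and read off the recursion
\begin{equation}
X_{0}(t) = \kappa, \qquad X_{n+1}(t) = -\theta\, I_{0^{+}}^{\alpha}\bigl(X_{n}(t)\bigr), \quad n \geq 0.
\end{equation}

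The computational engine is the fractional integral of a pure power, which I would establish once with the substitution $\tau = st$ exactly as in (\ref{myeq11}) and the Beta--Gamma identity (\ref{my2}):
\begin{equation}
I_{0^{+}}^{\alpha}\bigl(t^{\beta}\bigr) = \frac{t^{\alpha+\beta}}{\Gamma(\alpha)}\int_{0}^{1}(1-s)^{\alpha-1}s^{\beta}\,ds = \frac{\Gamma(\beta+1)}{\Gamma(\beta+\alpha+1)}\,t^{\beta+\alpha}.
\end{equation}
Feeding $X_{0} = \kappa$ (that is, $\beta = 0$) into the recursion gives $X_{1}(t) = -\theta\kappa\, t^{\alpha}/\Gamma(\alpha+1)$, and one more step gives $X_{2}(t) = \theta^{2}\kappa\, t^{2\alpha}/\Gamma(2\alpha+1)$, so the pattern is visible at once. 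I would then close the argument by induction: assuming $X_{n}(t) = (-1)^{n}\theta^{n}\kappa\, t^{n\alpha}/\Gamma(n\alpha+1)$, the recursion together with the power formula (taking $\beta = n\alpha$) yields
\begin{equation}
X_{n+1}(t) = -\theta\,\frac{(-1)^{n}\theta^{n}\kappa}{\Gamma(n\alpha+1)}\cdot\frac{\Gamma(n\alpha+1)}{\Gamma((n+1)\alpha+1)}\,t^{(n+1)\alpha} = \frac{(-1)^{n+1}\theta^{n+1}\kappa\, t^{(n+1)\alpha}}{\Gamma((n+1)\alpha+1)}.
\end{equation}
The telescoping of the $\Gamma(n\alpha+1)$ factor is exactly what makes the closed form clean; summing over $n$ reproduces the claimed series, which is precisely $\kappa\, E_{\alpha}(-\theta t^{\alpha})$ in terms of the one-parameter Mittag--Leffler function, and specialising $\alpha = 1/2$ recovers (\ref{propo}).

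The step I expect to be the genuine obstacle is not the algebra but the analytic justification: one must argue that the Adomian series converges and that $L^{-1}$ may be applied term by term. Here the factor $\Gamma(n\alpha+1)$ grows super-exponentially in $n$ by Stirling, so for any fixed $t$ the series converges absolutely with infinite radius of convergence, and the interchange of $I_{0^{+}}^{\alpha}$ with the summation is legitimate on any compact interval. A secondary subtlety worth flagging is the inversion identity itself: for the Riemann--Liouville derivative one has $I_{0^{+}}^{\alpha}D_{0^{+}}^{\alpha}f = f - (\text{initial-value term})$ carrying a $t^{\alpha-1}$ contribution, so the clean form $X(t) - X(0)$ used above is really the Caputo normalisation. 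I would either adopt the Caputo convention explicitly, as is already implicit in Proposition~\ref{myth}, or absorb the extra term into $X_{0}$ so that the recursion and the final series remain unchanged.
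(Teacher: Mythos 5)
Your proposal is correct and follows essentially the same route as the paper's own proof: the identical Adomian splitting $L = D_{0^{+}}^{\alpha}$, $R = \theta X$, $L^{-1} = I_{0^{+}}^{\alpha}$, the same power-rule computation via the Beta--Gamma identity, and the same induction yielding $X_{n}(t) = (-1)^{n}\theta^{n}\kappa\, t^{n\alpha}/\Gamma(n\alpha+1)$. Your additions --- proving the power rule $I_{0^{+}}^{\alpha}(t^{\beta})$ once for general $\beta$ rather than case by case, naming the sum as the Mittag--Leffler function $\kappa E_{\alpha}(-\theta t^{\alpha})$, and flagging the convergence of the series and the Riemann--Liouville versus Caputo initial-value subtlety behind the identity $I_{0^{+}}^{\alpha}D_{0^{+}}^{\alpha}X = X - X(0)$ --- are genuine improvements in rigor over the paper, which passes over all three points silently.
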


\begin{proof}
~The proof requires frequent uses of parts  $ \left( \ref{part1}\right) $ and $ \left( \ref{part2}\right) $ of section \ref{sec:headings1}. Indeed, it is easy to show that 

\begin{equation}
 X_{0}\left( t\right) = X\left( 0\right) = \kappa.
\end{equation}
A order 1, the quantity  $  X_{1}\left( t\right), $ gives 
\begin{equation}
\label{eqm}
 X_{1}\left( t\right) = -L^{-1}\left( R X_{0}\left( t\right)\right) = -\left( \frac{\theta \kappa  t^{\alpha}}{\Gamma\left( \alpha + 1\right) }\right).
\end{equation}
By introducing $\left( \ref{eqm}\right)$ into the following relation, one has
\begin{equation} 
\label{eqm3}
 X_{2}\left( t\right) = -L^{-1}\left( R X_{1}\left( t\right) \right) =  \frac{\theta^{2} \kappa I_{0^{+}}^{\alpha}\left(  t^{\alpha}\right)}{\Gamma\left( \alpha + 1\right) }.
\end{equation}

So, we get the following results :
\begin{equation} 
\label{eqm2}
 I_{0^{+}}^{\alpha}\left(  t^{\alpha}\right)=\frac{1}{\Gamma\left( \alpha\right) }\int_{0}^{1}\left( t - st\right)^{\alpha-1}\left( st\right)^{\alpha} tds =  \frac{t^{2\alpha}}{\Gamma\left( \alpha\right)}\beta\left( \alpha, \alpha + 1\right) =  \frac{t^{2\alpha}\Gamma\left( \alpha + 1\right)}{\Gamma\left( 2\alpha + 1\right)}.
\end{equation}
Futhermore, substituting $ \left(\ref{eqm2}\right) $ into $ \left(\ref{eqm3}\right), $ one has
\begin{equation}
X_{2}\left( t\right) = \frac{\theta^{2}\kappa}{\Gamma\left( \alpha + 1\right)}\times \frac{t^{2\alpha}\Gamma\left( \alpha + 1\right)}{\Gamma\left( 2\alpha + 1\right)} = \frac{\theta^{2}\kappa t^{2\alpha}}{\Gamma\left( 2\alpha + 1\right)}.
\end{equation}
For $  X_{3}\left( t\right),  $ we have
\begin{equation}
 X_{3}\left( t\right) = -L^{-1}\left( R X_{2}\left( t\right) \right) = -I_{0^{+}}^{\alpha}\left( \frac{\theta^{3}\kappa t^{2\alpha}}{\Gamma\left( 2\alpha + 1\right)}\right) = \frac{-\theta^{3}\kappa}{\Gamma\left( 2\alpha + 1\right)}I_{0^{+}}^{\alpha}\left( t^{2\alpha}\right).
\end{equation}
Applying the same operator, it comes that
\begin{equation}
I_{0^{+}}^{\alpha}\left(  t^{2\alpha}\right) = \frac{1}{\Gamma\left( \alpha\right)}\int_{0}^{1} t^{3\alpha}\left( 1 - s\right)^{\alpha-1} s^{2\alpha} ds = \frac{t^{3\alpha}}{\Gamma\left( \alpha\right)}\beta\left( \alpha, 2\alpha + 1\right).
\end{equation}
And finally, it comes that
\begin{equation}
 X_{3}\left( t\right) = \frac{-\theta^{3}\kappa}{\Gamma\left( 2\alpha + 1\right)}\frac{t^{3\alpha}\Gamma\left( 2\alpha + 1\right)}{\Gamma\left( 3\alpha + 1\right)} = \frac{-\theta^{3}\kappa t^{3\alpha}}{\Gamma\left( 3\alpha + 1\right)}.
\end{equation}
By conjecture, we have the solution of the problem $\left( \ref{myeq2}\right) $ as follows
\begin{equation}
X_{n}\left( t\right) = \frac{\left( -1\right)^{n}\theta^{n}\kappa t^{n\alpha}}{\Gamma\left( n\alpha + 1\right)}.
\end{equation}
According to $\left(\ref{sum}\right)$ we finally obtain the following result
\begin{equation}
X\left( t\right)  = \displaystyle \sum_{n=0}^{+\infty}  X_{n}\left( t\right) = \displaystyle \sum_{n=0}^{+\infty}\frac{\left( -1\right)^{n}\theta^{n}\kappa t^{n\alpha}}{\Gamma\left( n\alpha + 1\right)}.
\end{equation}
Our result is proved if we show by recurrence that 
\begin{equation}
X\left( t\right) = \frac{\left( -1\right)^{n}\theta^{n}\kappa t^{n\alpha}}{\Gamma\left( n\alpha + 1\right)}.
\end{equation}

\begin{equation}
\mbox{For the order 0,} \hspace{0.5cm} n = 0, \hspace{0.5cm} X\left( t\right) = X_{0}\left( t\right) = \kappa.
\end{equation}
The property is true at order $n=0$, now let us suppose it true at any order $n$ and show that the property is true at order $n+1$, i.e 

\begin{equation}
X_{n + 1}\left( t\right) = \frac{\left( -1\right)^{n + 1}\theta^{n + 1}\kappa t^{\left( n + 1\right)\alpha}}{\Gamma\left( \left( n+1\right)\alpha + 1\right)}.
\end{equation}
Using the same approach, we obtain the results successively as follows
\begin{equation}
X_{n + 1}\left( t\right) = - L^{-1}\left( RX_{n}\left( t\right) \right) = -I_{0^{+}}^{\alpha}\left( \theta\left( \frac{\left( -1\right)^{n}\theta^{n}\kappa t^{n\alpha}}{\Gamma\left( n\alpha + 1\right)}\right) \right);  
\end{equation}
which gives
\begin{equation}
X_{n + 1}\left( t\right) =  = \frac{\left( -1\right)^{\left( n+1\right)}\theta^{\left( n+1\right)}\kappa I_{0^{+}}^{\alpha}\left( t^{n\alpha}\right)}{\Gamma\left( n\alpha + 1\right)}.
\end{equation}

Thus,
\begin{equation}
I_{0^{+}}^{\alpha}\left( t^{n\alpha}\right) = \frac{1}{\Gamma\left( \alpha\right)}\int_{0}^{1}\left( t - st\right)^{\alpha-1}\left( st\right)^{n\alpha} tds = \frac{t^{\left( n+1\right)}}{\Gamma\left( \alpha\right)}\int_{0}^{1}\left( 1 - s\right)^{\alpha-1} s^{n\alpha} ds.
\end{equation}
This is equivalent to :
\begin{equation}
I_{0^{+}}^{\alpha}\left( t^{n\alpha}\right)  =  \frac{1}{\Gamma\left( \alpha\right)}\beta \left( \alpha, n\alpha + 1\right) = \frac{t^{\left( n+1\right)}\alpha \Gamma\left( n\alpha + 1\right)}{\Gamma\left( \left( n+1\right) \alpha +1\right) }
\end{equation}
and consequently
\begin{equation}
X_{n+1}\left( t\right) = \frac{\left( -1\right)^{n+1}\theta^{n+1}\kappa}{\Gamma\left( n\alpha + 1\right)}\frac{t^{\left( n+1\right)\alpha}\Gamma\left( n\alpha + 1\right)}{\Gamma\left( \left( n+1\right)\alpha + 1\right)} = \frac{\left( -1\right)^{n+1}\theta^{n+1}\kappa t^{\left( n+1\right)\alpha}}{\Gamma\left( \left( n+1\right)\alpha + 1\right)}.
\end{equation}
Thus the property is also true to order $n+1.$ So, it is true to any order $n$, i.e.
\begin{equation}
X_{n}\left( t\right) = \frac{\left( -1\right)^{n}\theta^{n}\kappa t^{n\alpha}}{\Gamma\left( n\alpha + 1\right)}.
\end{equation}
Finally, one has 
\begin{equation}
X\left( t\right)  = \displaystyle \sum_{n=0}^{+\infty}  X_{n}\left( t\right) = \displaystyle \sum_{n=0}^{+\infty}\frac{\left( -1\right)^{n}\theta^{n}\kappa t^{n\alpha}}{\Gamma\left( n\alpha + 1\right)}.
\end{equation}
\end{proof}

\begin{remark}
By considering $\alpha = \frac{1}{2}$ we obtain the formula getting by $\left(\ref{lig}\right).$ 
\end{remark}
The paper is closed with a comparison and simulation regarding the new model FDE and existing one KDEM, extensively commented on in the introduction.
\subsection{Comparative study and simulation}
\label{sec:headings3}
The main of this section is to extend previous theoretical investigate result $\left( \ref{propo}\right)$ graphically.
The comparison of the two models, the existing one and the one we propose, is obtained by application to dioxins. The elimination rate is $ \theta $ = 0.006418 and assume that the initial body burden at $ T_{0}= 0 $ is $ X\left( 0\right) = 15 pg/kg. $

\begin{table}[htbp!]
	\centering
	\caption{Evolution of the quantity of the contaminant over time}
	\begin{tabular}{|c|c|c|}
		
		\hline
		Time (months) & $X_{i+1}\left( t\right)  = X_{i}.e^{-\theta t} $ & $
		X_{i+1}\left( t\right) = X_{i}\left( e^{\theta^{2} t} - \frac{4\mid \theta \mid\sqrt{t}}{\sqrt{\pi}}\displaystyle\sum_{n = 0}^{+\infty}\frac{4^{n}\left( n + 1\right) ! \left( \theta^{2}t\right) ^{n}}{\left( 2n + 2 \right) !}\right) 
		$   \\
		\hline
		0 & 15 & 15 \\
		\hline
		1 & 14,99903733 & 14,99891376   \\
		\hline
		2 & 14,99711218 & 14,99737774\\
		\hline
		3 & 14,99422491 & 14,99549674\\
		\hline
	\end{tabular}
	\label{tab:table1}
\end{table}

The table \ref{tab:table1} shows that our proposed process results in a slightly higher amount of contaminant over time compared to the existing model. This proves that this model is well adapted to failing immune systems. Thus, it is well suited to model the dynamics of contaminant evolution in the body of people living with comorbidity, fragile children and pregnant women. Since the KDEM model is adapted to the non-failing immune system, we therefore implemented it for a graphical view of the evolution of the contaminant dynamics in the immunocompetent organism.

\begin{figure}[htbp!]
	\centering
	\includegraphics[scale=0.7]{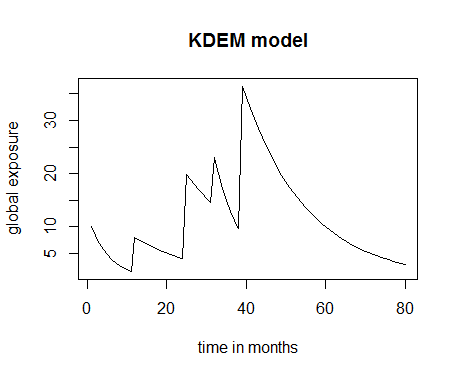}
	\caption{A trajectory of the exposure process to a contaminant through time $ \theta $ = 0.006418 }
	\label{fig:fig2}
\end{figure}

The graph above $\left(see fig \ref{fig:fig2}\right)$ shows the evolution of the dynamics of dioxins once introduced in the human organism. The value of the elimination rate (0.006418) is obtained by the formula $ \theta=ln(2)/DV $ (DV the biological half-life of dioxin is 9 years or 108 months). Given that the tolerable threshold of dioxin by the human organism is 70pg/kg of body weight, we assume however the initial body load at 10 pg/kg of body weight. We note that in this process the time taken to exceed the tolerable threshold is long, in contrast to the trajectory of contaminants in the body of people living with a comorbidity, where this threshold is exceeded more quickly after the contamination $\left( see fig \ref{fig:fig1}\right).$ We recall that this exceeding can prove to be dangerous for the organism. 

\begin{figure}[htbp!]
	\centering
	\includegraphics[scale=0.7]{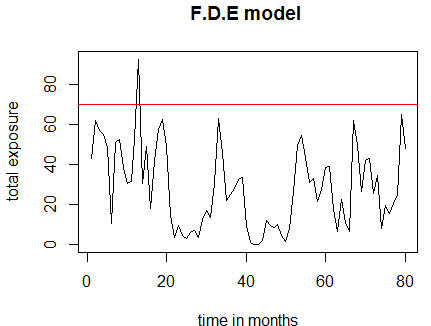}
	\caption{A trajectory of the exposure process to a contaminant through time $ \theta $ = 0.006418 }
	\label{fig:fig1}
\end{figure}

This graphical test shows that the organism of people living with comorbidity, people with weak immune system, namely the child not breastfed or malnourished, pregnant women, all these people, the system eliminates the contaminant slowly, which exposes the danger of developing a pathology more quickly than immunocompetent subjects.

\section{Conclusion and discussions}
\label{sec:headings4}
We have developed a model that takes into account a weak immune system for modeling the risks associated with poor nutrition. This allows medical services to properly assess and prevent the danger of exposure to food toxins and improve the life expectancy of these people already weakened by a pathology such as asthma, diabetes, cardiovascular diseases, AIDS... and especially malnourished children, people with disabilities, and people who are not well nourished, people of advanced age and pregnant women. We have shown that the toxin values in the organism successively obtained by the FDE model are slightly higher than the amount obtained by the existing model. Naturally, it is clear that our model is better than the KDEM model, but this confirms hypothesis (C1).\\
This fractional differential equation model that we propose in this paper to quantify dietary risk exposure marks progress in the analysis and search for solutions for the prevention of the above mentioned diseases. In our next research it will be taken into account in this new FDE model the multiple contaminations to which the organism is exposed daily, especially the application to contaminants found in food consumption in sub-Saharan African countries.

\section*{Data Availability}
 No physical data were used to support this study. The trajectories of the modeling processes are obtained by numerical simulation.
\section*{Conflicts of Interest}
The authors declare that there are no conflicts of interest for the publication of this paper.
\section*{Acknowledgements}
The authors would like to thank the Academic Editor and all the reviewers of this article for their comments and suggestions. Thank you for all that you do to advance science.

\end{document}